    \newtheorem{theorem}{Theorem}
    \newtheorem{corollary}[theorem]{Corollary}
    \newtheorem{lemma}[theorem]{Lemma}
    	\definecolor{darkgreen}{rgb}{0.01, 0.93, 0.29}
\definecolor{lightbrown}{rgb}{0.91, 0.4, 0.11}
\title{APX-Hardness and Approximation for the $k$-Burning Number Problem\thanks{The work of D. Mondal is partially supported by NSERC.}}
\author[1]{Debajyoti Mondal}
\author[2]{N. Parthiban}
\author[2]{V. Kavitha}
\author[3]{Indra Rajasingh}
\affil[1]{Department of Computer Science\\

University of Saskatchewan, Saskatoon, Canada\\

  \texttt{dmondal@cs.usask.ca}} 
\affil[2]{Department of Computer Science and Engineering\\

 SRM Institute of Science and Technology, Chennai, India\\

  \texttt{parthiban24589@gmail.com, kavitha.psk@gmail.com}}
\affil[3]{School of Advanced Sciences\\

Vellore Institute of Science and Technology, Chennai, India\\

  \texttt{indrarajasingh@yahoo.com}}
\newcommand{\changed}[1]{{\color{black} #1}}
\newcommand{\jyoti}[1]{{\color{black} #1}}
\begin{document}
\maketitle              
\begin{abstract}
Consider an information diffusion process on a graph $G$ that starts with $k>0$ burnt vertices, and at each subsequent step, burns the neighbors of the currently burnt vertices, as well as $k$ other unburnt vertices. The \emph{$k$-burning number} of $G$ is the minimum number of steps $b_k(G)$ such that all the vertices can be burned within $b_k(G)$ steps. Note that the last step may have smaller than $k$ unburnt vertices available, where all of them are burned. The $1$-burning number coincides with the well-known burning number problem, which was proposed to model the spread of social contagion. The generalization to $k$-burning number allows us to examine different worst-case contagion scenarios by varying the spread factor $k$.  

In this paper we prove that computing $k$-burning number is APX-hard, for any fixed constant $k$. We then give an $O((n+m)\log n)$-time 3-approximation algorithm for computing $k$-burning number, for any $k\ge 1$, where $n$ and $m$ are the number of vertices and edges, respectively. Finally, we show that even if the burning sources are given as an input, computing a burning sequence itself is an NP-hard problem.

\end{abstract}

\jyoti{

\section{Introduction}

We consider an information diffusion process that models a social contagion over time from a theoretical point of view. At each step, the contagion propagates from the infected people to their neighbors, as well as a few other people in the network become infected. The burning process, proposed by Bonato et al.~\cite{doi:10.1080/15427951.2015.1103339,DBLP:conf/waw/BonatoJR14}, provides a simple model for such a social contagion process. Specifically, the \emph{burning number} $b(G)$ of a graph $G$ is the minimum number of discrete time steps or rounds required to burn all the vertices in the graph based on the following rule. One vertex is burned in the first round. In each subsequent round $t$, the neighbors of the existing burnt vertices and a new unburnt vertex (if available) are burned. If a vertex is burned, then it remains burnt in all the subsequent rounds. Figure~\ref{fig:intro}(a) illustrates an example of the burning process. The vertices that are chosen to burn directly at each step, form the \emph{burning sequence}. 

\begin{figure}[pt]
    \centering
    \includegraphics[width=.65\textwidth]{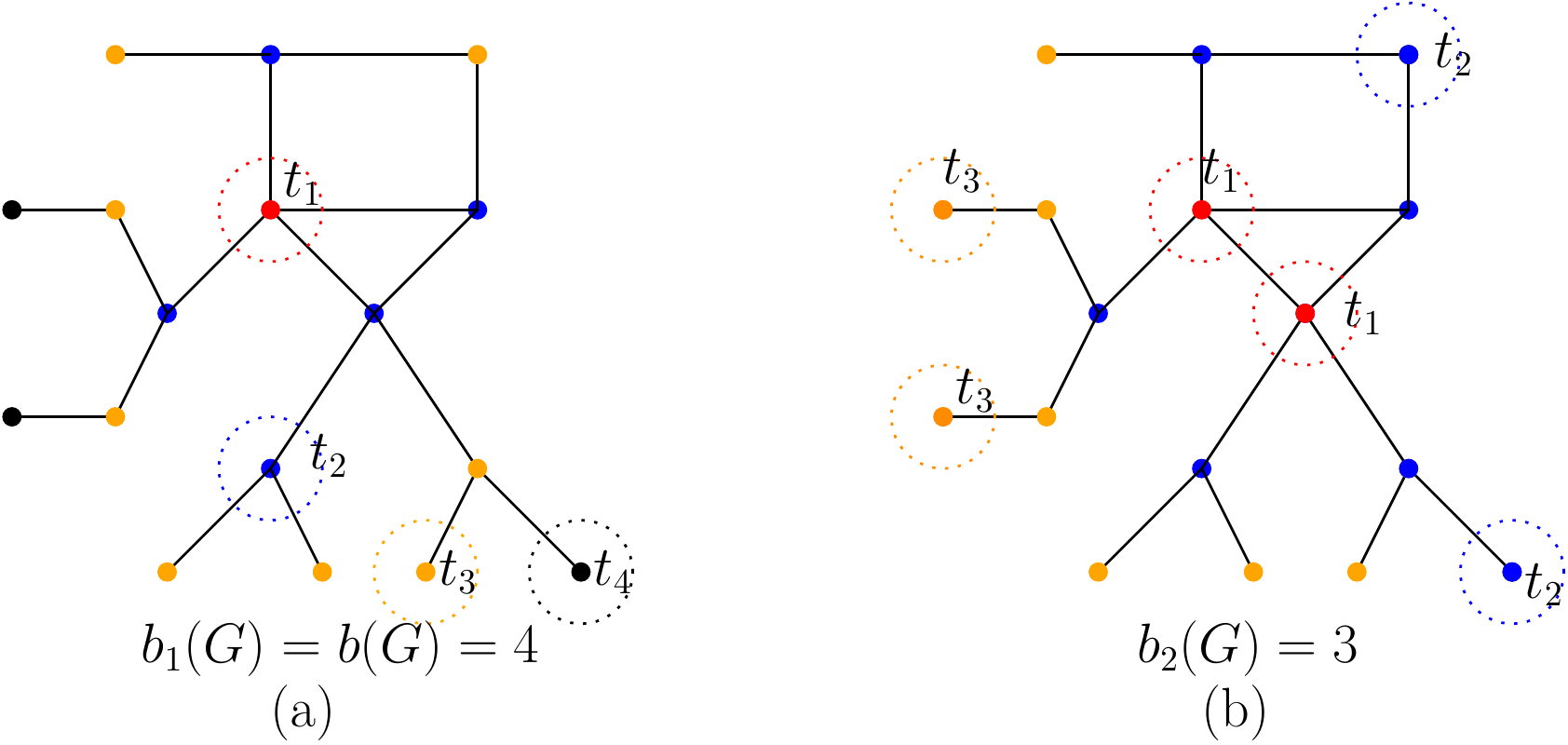}
    \caption{The process of burning a graph $G$. The unburnt vertices that have been chosen to burn at round $i$ (except for the neighbors of the previously burned vertices) are labelled with $t_i$.  (a) A 1-burning with  $4$ rounds, which is also the minimum possible number of rounds to burn all the vertices with 1-burning, i.e., $b(G)=4$. (b) A 2-burning with 3 rounds, which is the minimum possible, i.e., $b_2(G)=3$.}
    \label{fig:intro}
\end{figure}


In this paper we examine \emph{$k$-burning number} for a graph, which generalizes the burning number by allowing to directly burn $k$  unburnt vertices at each round; see Figure~\ref{fig:intro}(b). Throughout the paper, we use the notation  $b_k(G)$ to denote the $k$-burning number of a graph $G$. Note that in the case when $k=1$, the $1$-burning number $b_1(G)$ coincides with the original burning number $b(G)$. The burning process can be used to model a variety of applications, e.g, the selection of the vertices in social networks (e.g., LinkedIn or  Facebook) to fast spread information to the target audience with a  pipeline of steady new recruits. It may also be used in predictive models to examine the worst-case spread of disease. The generalization of a burning process to $k$-burning allows us to use $k$ as a model parameter, i.e., one can choose a cost-effective value for $k$ to increase the probability of reaching the target audience. 

\textbf{Related Work: }
The problem of computing the burning number of a graph is NP-complete, even for simple graph classes such as trees with maximum degree three, and for forests of paths~\cite{Bessy2017}. A rich body of research examines upper and lower bounds for the burning number for various classes of graphs. Bonato et al.~\cite{Bonato2016,Bessy2017}  showed that for every connected graph $G$, $b(G) \leq 2\sqrt{n}-1$, where $n$ is the number of vertices, and conjectured that the upper bound can be improved to $\lceil\sqrt{n}\rceil$. While the conjecture is still open, Land and Lu~\cite{DBLP:conf/waw/LandL16} improved this bound to $\frac{\sqrt{6n}}{2}$. However, the $\lceil \sqrt{n}\rceil$ upper bound holds for spider graphs~\cite{DBLP:journals/tcs/BonatoL19} and for $p$-caterpillars with at least $2 \left \lceil \sqrt{n} \right \rceil - 1$ vertices of degree 1~\cite{Hiller2019}. 

The burning number problem has received considerable attention in recent years and nearly tight upper and lower bounds have been established for various well-known graph classes including generalized Petersen graphs~\cite{Sim2017}, $p$-caterpillars~\cite{Hiller2019}, graph products~\cite{Mitsche2018}, dense and tree-like graphs~\cite{Kamali2019} and theta graphs \cite{Liu2019}.  The NP-hardness of the burning number problem motivated researchers to study the parameterized complexity and approximation algorithms.  Kare and Reddy~\cite{kare2019parameterized} gave a fixed-parameter tractable algorithm to compute burning number parameterized by neighborhood diversity, and showed that for cographs and split graphs the burning number can be computed in polynomial time.  Bonato and Kamali~\cite{bonato2019approximation} showed that the burning number of a graph is approximable within a factor of 3 for general graphs and  2 for trees. They gave a polynomial-time approximation scheme (PTAS) forests of paths, and a polynomial-time algorithm when the number of paths is fixed. They also mentioned that `it might be possible that a PTAS exists for general graphs'.

A closely related model that relates to the burning process is the firefighter model~\cite{Firefighter}. In a firefighter problem,  a fire breaks out at a vertex, and at each subsequent step, the fire propagates to the undefended neighbors and the firefighter can defend a vertex from burning. The burnt and defended vertices remain so in the next steps. The problems seek to maximize the number of defended vertices. This problem does not have a constant factor approximation~\cite{DBLP:journals/algorithmica/AnshelevichCHS12}, which indicates that it is very different than the burning number problem. A variant of firefighter problem where $b\ge 2$ vertices can be defended at each step has   been shown not to be approximable within a constant factor~\cite{BAZGAN2013899}. There are many information diffusion models  and broadcast scheduling methods   in the literature~\cite{Information,MIDDENDORF1993281,Plane}, but the $k$-burning process seems to differ in the situation that at each step it allows $k$ new sources to appear anywhere in the graph, i.e., some new burn locations may not be in close proximity of the currently burnt vertices.
}

\jyoti{
\textbf{Our Contribution.} In this paper, we generalize the concept of burning number of a graph to $k$-burning number.  \changed{We first prove that computing burning number is APX-hard, settling the complexity question posed by Bonato and  Kamali~\cite{bonato2019approximation}. We  then show that the hardness result holds for $k$-burning number, for any fixed $k$.} 
 We \changed{prove} that $k$-burning number is $3$-approximable in polynomial time, for any $k\ge 1$, where a 3-approximation algorithm was known previously for the case when $k=1$~\cite{bonato2019approximation}. 
 Finally, we show that even if the burning sources are given as an input, computing a burning sequence itself is an NP-hard problem. 
}

\section{Preliminaries}
\jyoti{
In this section we introduce some notation and terminology.

Given a graph $G$, the \emph{$k$-burning process} on $G$ is a discrete-time process defined as follows: Initially, at time $t=0$, all the vertices are unburnt. At each time step $t \geq 1$, the neighbors of the previously burnt vertices are burnt.  In addition, $k$ new unburnt vertices are burned directly, which are called the \emph{burning sources} appearing at the $t$th step. If the number of available unburnt vertices is less than $k$, then all of them are burned. The burnt vertices remain in that state in the subsequent steps. The process ends when all vertices of $G$ are burned. The $k$-burning number of a graph $G$, denoted by $b_{k}(G)$, is the minimum number of rounds needed for the process to end. For $k=1$, we omit the subscript $k$ and simply use the notation $b(G)$.

The burning sources are chosen at every successive round form an ordered list, which is referred to as a \emph{$k$-burning sequence}.   A burning sequence corresponding to the minimum number of steps $b_{k}(G)$ is referred to as a \emph{minimum burning sequence}. We use the notation $L(G,k)$ to denote the length of a minimum $k$-burning sequence. 



Let $G=(V,E)$ be a graph with $n$ vertices and $m$ edges. A vertex cover is a set $S\subseteq V$ such that  at least one end-vertex of each edge belongs to $S$. A \emph{dominating set} of $G$ is a set $D\subseteq V$ such that every vertex in $G$ is either in $D$ or adjacent to a  vertex in $D$. An \emph{independent set} of $G$ is a set of vertices such that no two vertices are adjacent in $G$. A \emph{minimum vertex cover} (resp., minimum independent and dominating set) is a vertex cover (resp., independent and dominating set) with the minimum cardinality.  An independent set $Q$ is called \emph{maximal} if one cannot obtain a larger independent set by adding more vertices to $Q$, i.e., every vertex in $V\setminus Q$ is adjacent to a vertex in $Q$.  
 }



\jyoti{
\section{APX-Hardness}

In this section we show that computing burning number is an APX-hard problem, which settles the complexity question posed by Bonato and Kamali~\cite{bonato2019approximation}. We then show that the $k$-burning number problem is APX-hard for any $k\in O(1)$.

\subsection{APX-Hardness for Burning Number}
 \changed{A graph is called \emph{cubic} if all its vertices are of degree three.}  We will reduce the minimum vertex cover problem in cubic graphs, which is known to be APX-hard~\cite{ALIMONTI2000123}. Given an instance $G=(V,E)$ of the minimum vertex cover, we construct a graph $G'$ of the burning number problem. We then show that a polynomial-time approximation scheme (PTAS) for the burning number in $G'=(V',E')$ implies a PTAS for the minimum vertex cover problem, which contradicts that the minimum vertex cover problem is APX-hard.

\textbf{Construction of $G'$.} 
The graph $G'=(V',E')$ will contain vertices that correspond to the vertices and edges of $G$. Figures~\ref{fig:cons}(a)--(c)  illustrate an example for the construction of $G'$ from $G$. To keep the illustration simple, we used a maximum degree three graph instead of a cubic graph. 

To construct $V'$, we first make a set $S$ by taking a copy of the vertex set $V$. We refer to $S$ as the \emph{v-vertices} of $G'$. For every edge $(u,v) 	\in E$,  we include two vertices $uv$ and $vu$ in $V'$, which we refer to as the \emph{e-vertices} of $G'$. In addition, we add $(2n+3)$ isolated vertices  in $V'$, where $n=|V|$ is the number of vertices in $G$.

\begin{figure}[h]
    \centering
    \includegraphics[width=.9\textwidth]{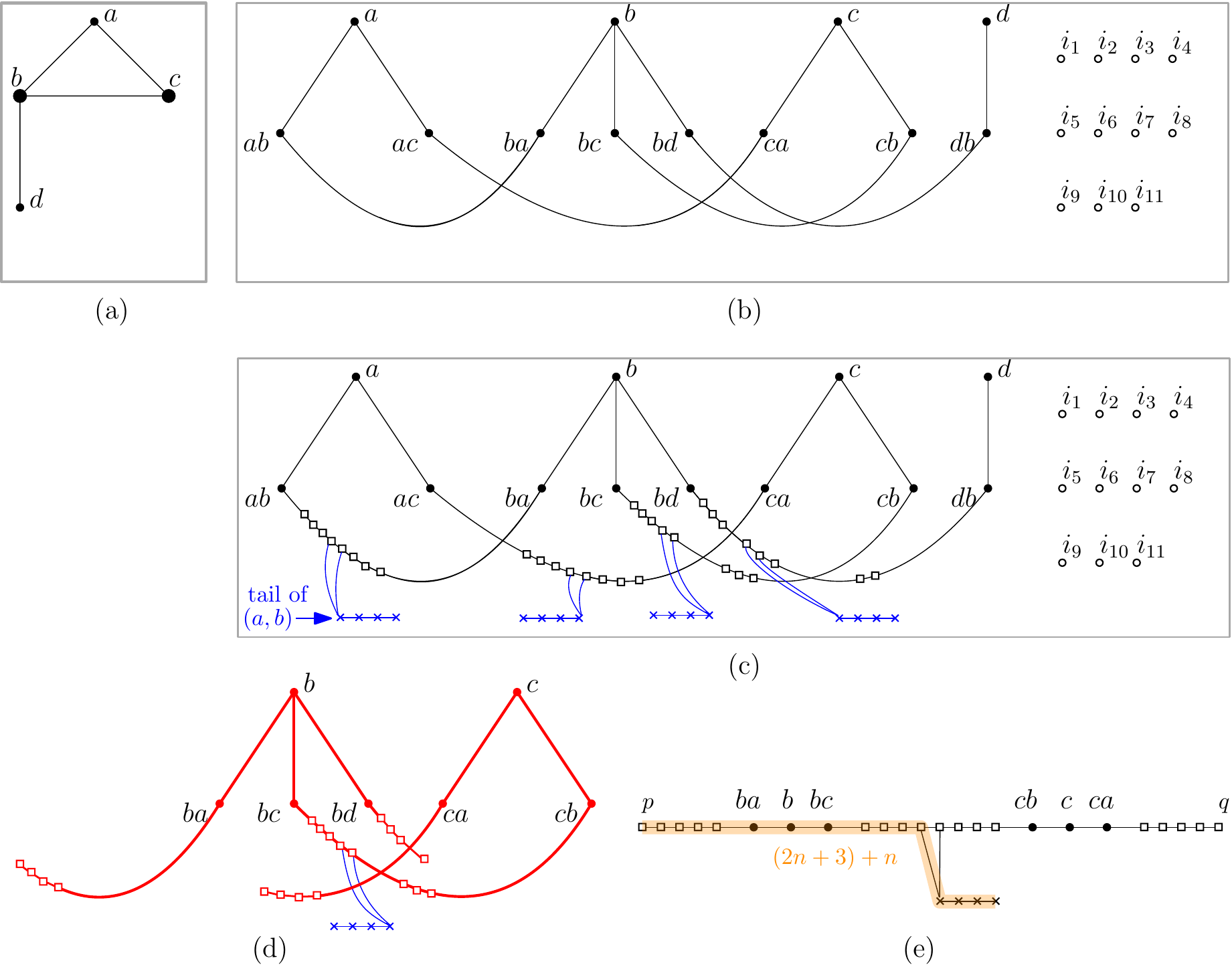}
    \caption{Illustration for the construction of $G'$ from $G$. To keep the illustration simple, here we use a maximum degree three graph instead of a cubic graph. (a) $G$, (b) construction idea, and (c) $G'$, where v- and e-vertices are shown in black disks, d-vertices are shown in squares, tail vertices are shown in cross, and isolated vertices are shown in unfilled circles. (d) $H_{bc}$. (e) Burning $H_{bc}$ with burning sources  $p$ and $q$, which are outside of $H_{bc}$.}
    \label{fig:cons}
\end{figure}

For every edge $(u,v)\in E$, we add three edges in $E'$:  $(u,uv),(v,vu)$ and $(uv,vu)$. Figure~\ref{fig:cons}(b) illustrates the resulting graph. We then divide the edge $(uv,vu)$ with $2n$ division vertices. We refer to these division vertices as the \emph{d-vertices} of $G'$. We also add an $n$-vertex path and connect one end with the median division vertices of the path $u,uv,\ldots,vu,v$. We refer to this $n$-vertex path as the tail of edge $(u,v)$.   Figure~\ref{fig:cons}(c) illustrates the resulting graph.

This completes the construction of $G'$. \changed{Note that the number of vertices and edges in $G'$ is $O(n)$, and it is straightforward to compute $G'$ from $G$ in $O(n)$ time.} 
}


\subsubsection{Reduction}
\jyoti{
 In the following, we show how to compute a burning sequence in $G'$ from a vertex cover in $G$, and vice versa.

\begin{lemma}\label{lem:fantastic}
If $G$ has a vertex cover of size at most $q$, then $G'$ has a burning sequence of length at most $(q+2n+3)$, and vice versa.    
\end{lemma}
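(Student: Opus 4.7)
The plan is to prove the two implications of the equivalence separately.

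For the forward direction, given a vertex cover $C \subseteq V$ of size at most $q$, I would construct a burning sequence for $G'$ explicitly: burn the vertices of $C$ (identified with v-vertices of $G'$) at times $1, 2, \ldots, |C|$, then burn the $2n+3$ isolated vertices at times $|C|+1, \ldots, |C|+2n+3$. The total length is $|C| + 2n + 3 \le q + 2n + 3 = L$. The verification is routine: isolated vertices are burnt directly; for any non-isolated vertex $w$ lying in a gadget $H_{uv}$, the cover property guarantees that at least one of $u, v$ lies in $C$ and is burnt by time $|C| \le q$, and since $H_{uv}$ has diameter $2n+3$, fire from that endpoint reaches $w$ by time $|C| + (2n+3) \le L$.

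For the reverse direction, suppose $G'$ admits a burning sequence $B$ of length $L \le q + 2n + 3$. Since each isolated vertex has no neighbour, all $2n+3$ of them must themselves be sources, so the set $B'$ of non-isolated sources has size at most $L - (2n+3) \le q$. If $q \ge n$, then $V(G)$ is a vertex cover of size $n \le q$ and we are done, so I may assume $q < n$. I would define $\phi : B' \to V(G)$ by sending each v-vertex source to itself and each interior source of a gadget $H_{uv}$ to the nearer of $u, v$ (breaking ties arbitrarily). Set $C := \phi(B')$, so $|C| \le |B'| \le q$.

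The main obstacle is to show $C$ is a vertex cover, which I would handle by contradiction. Suppose $(u,v) \in E$ but $u, v \notin C$. Unpacking $\phi$, this forces: $u, v$ are not sources; no source lies in the interior of $H_{uv}$; and every source in a touching gadget $H_{uy}$ (respectively $H_{vz}$) lies on the side closer to $y$ (respectively $z$), i.e., at distance at least $n+2$ from $u$ (respectively $v$). I would then focus on the tail-end vertex $t_n$ of $H_{uv}$, whose distances from $u$ and $v$ are $2n+1$ and $2n+2$, and argue it cannot be burnt within $L$ rounds. There are three possibilities: (a) a source $s$ burns $t_n$ directly, but the above constraints force $d(s, t_n) \ge (n+2) + (2n+1) = 3n+3$, whence $t(s) \le L - (3n+3) = q - n < 0$, impossible; (b) fire reaches $t_n$ through $u$, which requires $t_u \le L - (2n+1) = q+2$, yet $t_u = \min_{s \in B}(t(s) + d(s,u)) \ge 1 + (n+2) = n+3 > q+2$; (c) symmetrically, fire through $v$ requires $t_v \le q+1$ but $t_v \ge n+3 > q+1$. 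All three options fail, giving the required contradiction. The delicate ingredient is the uniform distance lower bound $d(s,u), d(s,v) \ge n+2$ for every source $s$, which requires a careful enumeration of the possible locations of sources and exploits the fact that the $(2n+3)$-length subdivided edges and the length-$n$ tails placed at the median are tuned precisely so that these bounds bite whenever $q < n$.
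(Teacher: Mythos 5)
Your proof is correct and follows essentially the same route as the paper's: the forward direction burns the cover vertices followed by the isolated vertices, and the reverse direction maps each non-isolated source to the nearer endpoint of its gadget and derives a contradiction from the fact that the tail of an uncovered edge's gadget cannot be burnt within $q+2n+3$ rounds. Your version is slightly more explicit in the distance bookkeeping (the uniform bound $d(s,u)\ge n+2$ and the three-way case analysis at the tail end) and in dispatching the trivial case $q\ge n$, which the paper leaves implicit.
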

\begin{proof}
We will use the idea of a neighborhood of a vertex. By a \emph{$r$-hop neighborhood} of a vertex $u$ in $G'$, we denote the vertices that are connected to $u$ \changed{by a path} of at most $r$ edges. 

\textbf{Vertex Cover to Burning Sequence:} Let $\mathcal{C}$ be a  vertex cover of $G$ of size at most $q$. In $G'$, we create a  burning sequence $S$ by choosing the v-vertices of $\mathcal{C}$ as the burning sources (in any order), followed by the burning of the $(2n+3)$ isolated vertices. Note that we need at most $q$ rounds to burn the v-vertices in $G'$ that correspond to the nodes in  $\mathcal{C}$, and in the subsequent  $(2n+3)$ rounds, we can burn the isolated vertices. 

We now show that all the vertices are burnt within $(q+2n+3)$ rounds. First observe that after $q$ rounds, all the v-vertices corresponding to $\mathcal{C}$ are burnt. Since $\mathcal{C}$ is a vertex cover,  all the v-vertices that do not belong to $\mathcal{C}$ are within $(2n+3)$-hop neighborhood from some vertex in $\mathcal{C}$. Therefore, all $v$-vertices will be burnt within the next  $(2n+3)$ rounds. Similarly, all the e-vertices, d-vertices and tail vertices are within $(2n+3)$-hop neighborhood from some vertex in $\mathcal{C}$, and thus they will be burnt within the next $(2n+3)$ rounds. Since the isolated vertices are chosen as the burning sources for the last $(2n+3)$ rounds, all the vertices of $G'$ will be burnt within $(q+2n+3)$ rounds.

\textbf{Burning Sequence to Vertex Cover:} We now show how to transform a given burning sequence $S$ of length $(q+2n+3)$ into a vertex cover $\mathcal{C}$ of $G$ such that $|C| \le q$. Let $S$  be the burning sources of the given burning sequence for $G'$. 

For every edge $(b,c)\in E$, we define $H_{bc}$ to be a subgraph of $G'$ induced by the $(n+1)$-hop neighborhood of $b$ and $c$, as well as the vertices on the path $b,bc,\ldots,cb,c$, and the vertices of the tail associated to $(b,c)$, e.g., see Figure~\ref{fig:cons}(d). 
For every $H_{bc}$ and for each burning source $w$ in it, we check whether $w$ is closer to $b$ than $c$. If  $b$ (resp., $c$) has a smaller shortest path distance to $w$, then we include $b$ (resp., $c$) into $\mathcal{C}$. We break ties arbitrarily.  


We now prove that $\mathcal{C}$ is a vertex cover of $G$. Suppose for a contradiction that there exists an edge $(b,c)\in E$, where neither $b$ nor $c$ belongs to $\mathcal{C}$. 
Then every burning source $s$ in $G'$ is closer to some $v$-vertex other than $b$ and $c$. In other words, $H_{bc}$ is empty of any burning source. Since $H_{bc}$ contains an induced path of $(n+1)+1+(2n+2)+1+(n+1) = (4n+6)$ vertices and a tail of $n$ vertices, burning all the vertices by placing burning sources outside $H_{bc}$ would take at least $(\frac{4n+6}{2}+n+1)$ steps, which is strictly larger than $(q+2n+3)$, e.g., see Figure~\ref{fig:cons}(d). 
Therefore, by construction of $\mathcal{C}$, at least one of $b$ and $c$ must belong to $\mathcal{C}$.  


It now suffices to show that the size of $\mathcal{C}$ is at most $q$. Since there are $(2n+3)$ isolated vertices in $G'$, they must correspond to $(2n+3)$ burning sources in the burning sequence. The remaining $q$ burning sources are distributed among the graphs $H_{uv}$. Therefore, $\mathcal{C}$ can have at most $q$ vertices. 
\end{proof}

}

\jyoti{
We now have the following theorem.
\begin{theorem} 
\label{t:apx}
The burning number problem is APX-hard.
\end{theorem}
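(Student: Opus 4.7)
The plan is to show that the construction of $G'$ from $G$, together with Lemma~\ref{lem:fantastic}, forms an L-reduction from minimum vertex cover on cubic graphs to the burning number problem. Since minimum vertex cover on cubic graphs is APX-hard~\cite{ALIMONTI2000123} and L-reductions preserve APX-hardness, the theorem will follow immediately.

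First I would let $\tau(G)$ denote the minimum vertex cover size of $G$ and observe that the two directions of Lemma~\ref{lem:fantastic} together imply the equality $b(G') = \tau(G) + 2n + 3$. The first L-reduction inequality requires $b(G')$ to be bounded by a constant multiple of $\tau(G)$. For this I would invoke the fact that every cubic graph on $n$ vertices has $3n/2$ edges and each vertex covers at most three edges, which forces $\tau(G) \ge n/2$. Substituting $n \le 2\tau(G)$ then yields $b(G') \le 5\tau(G) + 3 = O(\tau(G))$, as required.

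Next I would verify the second L-reduction inequality. Given any burning sequence for $G'$ of length $L$, the construction inside the proof of Lemma~\ref{lem:fantastic} extracts a vertex cover $\mathcal{C}$ of $G$ with $|\mathcal{C}| \le L - (2n+3)$. Combining this with the identity $b(G') = \tau(G) + 2n + 3$ gives $|\mathcal{C}| - \tau(G) \le L - b(G')$, so additive errors transfer with constant~$1$. Together with the linearity bound from the previous paragraph, this establishes an L-reduction, and hence APX-hardness.

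The only substantive step beyond citing Lemma~\ref{lem:fantastic} is confirming that the $\Omega(n)$ lower bound on $\tau(G)$ for cubic graphs is strong enough to absorb the additive term $2n+3$ into a constant factor times $\tau(G)$; once that is in place, the remainder is a routine application of the standard definition of an L-reduction, so I do not anticipate any conceptual obstacle.
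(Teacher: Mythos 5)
Your proposal is correct and follows essentially the same route as the paper: both rest on Lemma~\ref{lem:fantastic} in both directions together with a linear lower bound on the minimum vertex cover of a cubic graph ($\tau(G)\ge n/2$ by edge counting in your version, $|C^*|>n/3$ in the paper) to absorb the additive $2n+3$ term. The only cosmetic difference is that you package the argument as an L-reduction, whereas the paper runs the equivalent explicit PTAS-transfer computation and, instead of using the exact identity $b(G')=\tau(G)+2n+3$, separately bounds $b(G')\le 8n$ via the $2\sqrt{r}$ burning-number bound.
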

\begin{proof}
Let $G$ be an instance of the vertex cover problem in a cubic graph, and let $G'$ be the corresponding instance of the burning number problem.  
By Lemma~\ref{lem:fantastic}, if $G$ has a vertex cover of size at most  $q$, then $G'$ has a burning sequence of length at most  $(q+2n+3)$, and vice versa. Let $C^*$ be a minimum vertex cover in $G$. Then $b(G') \le  |C^*|+2n+3$.

Let $\mathcal{A}$ be a  $(1+\varepsilon)$-approximation algorithm for computing the burning number, where $\varepsilon>0$. Then the burning number computed using $\mathcal{A}$ is at most $(1+\varepsilon) b(G')$. By Lemma~\ref{lem:fantastic}, we can use the solution obtained from  $\mathcal{A}$ to compute a vertex cover $\mathcal{C}$ of size at most $(1+\varepsilon) b(G') - 2n - 3$ in $G$. Therefore, $\frac{|C|}{|C^*|} = \frac{(1+\varepsilon) b(G') -2n- 3}{ |C^*|}
= \frac{b(G')+\varepsilon b(G') -2n- 3}{ |C^*|}
\le \frac{(|C^*|+2n+3) +\varepsilon b(G') -2n- 3}{ |C^*|}
= 1+\frac{\varepsilon b(G')}{ |C^*|}$.


Note that $G'$ has $n$ v-vertices, $(2n+3)$ isolated vertices, $2|E|$ e-vertices,   $n|E|$ tail vertices and $2n|E|$ d-vertices. Since $|E|\le 3n/2$,  the total number of vertices in $G'$ without the isolated vertices is upper bounded by $n+ 3n + n^2 + 3n^2 \le 4n^2+ 4n\le 5n^2$, for any $n> 4$. Since the burning number of a connected graph with $r$ vertices is bounded by $2\sqrt{r}$~\cite{Bessy2017}, the burning number of $G'$ is upper bounded by $(2n+3)+ 2\sqrt{5n^2} <  8n$, where the term $(2n+3)$ corresponds to the isolated vertices in $G'$. \changed{In other words, we can always burn the connected component first, and then the isolated vertices.} Furthermore, by Brooks' theorem~\cite{Brooks}, $|C^*|>n/3$.

We thus have $\frac{|C|}{|C^*|}\le 1+ \frac{\varepsilon b(G')}{|C^*|} \le 1+\frac{8n\varepsilon}{|C^*|}  \le 1+\frac{8n\varepsilon}{n/3} = 1+24\varepsilon$, which implies a polynomial-time approximation scheme for the minimum vertex cover problem. Hence the APX-hardness of burning number problem follows from the APX-hardness of minimum vertex cover.
\end{proof}

\paragraph{Hardness for Connected Graphs:} 
Note that in our reduction, $G'$ was disconnected. However, we can prove the hardness even for connected graphs as follows. Let $G$ be the input cubic graph, and let $v$ be a vertex in $G$. We create another graph $H$ by adding two vertices $w$ and $z$ in a path $v,w,z$. It is straightforward to see that the size of a minimum vertex cover of $H$ is exactly one plus the minimum vertex cover of $G$. We now carry out the transformation into a burning number instance $G'$ using $H$, but instead of using $(2n+3)$ isolated vertices, we connect them in a path $P=(w,Q,Q',i_1,Q',i_2,Q',\ldots, i_{2n+3},Q')$, where $Q$ is a sequence of $(q+2n+2)$ vertices, $Q'$ is a sequence of  $(2n+2)$ vertices, and $i_1,\ldots,i_{2n+3}$ are the vertices corresponding to the (previously) isolated vertices. Note that $P\setminus\{u,Q\}$ has $(2n+2)(2n+3) + (2n+3) = (2n+3)^2$ vertices. Since the burning number of a path of $r$ vertices is $\lceil \sqrt{r}\rceil $~\cite{Bessy2017}, any burning sequence will require $(2n+3)$ burning sources for  $P\setminus\{u,Q\}$. 

Note given a vertex cover $\mathcal{C}$ in $H$ of length $q$, if $w$ is not in $\mathcal{C}$, then $\mathcal{C}$ must contain $z$. Hence we can replace $z$ by $w$. Therefore, we can burn all the vertices within $(q+2n+3)$ rounds by burning $w$ first and then the other vertices of $\mathcal{C}$, and then the vertices of $P\setminus\{u,Q\}$ using the known algorithm for burning path~\cite{Bessy2017}. On the other hand, if a burning sequence of length $(q+2n+3)$ is provided, then $(2n+3)$ sources must be used to burn $P\setminus\{u,Q\}$. Since they are at least $(q+2n+3)$ distance apart from the vertices of $H$, at most $q$ burning sources are distributed in $H$, implying a vertex cover of size $q$. We thus have the following corollary.

\begin{corollary}
The burning number problem is APX-hard, even for connected graphs.
\end{corollary}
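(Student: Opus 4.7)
The plan is to reduce from minimum vertex cover on cubic graphs again, but to modify the construction of Lemma~\ref{lem:fantastic} so that the output graph is connected while preserving the linear relationship between vertex cover size and burning number. As indicated in the preceding paragraph, I would start from a cubic instance $G$, pick an arbitrary vertex $v$, and form $H$ by attaching a path $v,w,z$ of two fresh vertices. The minimum vertex cover of $H$ has size exactly $|\mathrm{VC}(G)|+1$: one extra vertex is needed to cover the edge $(w,z)$, and the cover of $G$ can be extended by choosing $w$ (which also covers $(v,w)$), while conversely any cover of $H$ must use at least one of $\{w,z\}$ and without loss of generality we can take it to be $w$.

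Next I would apply the construction of $G'$ from Lemma~\ref{lem:fantastic} to $H$, except that the $(2n+3)$ isolated vertices are replaced by a single long path $P$ attached at $w$. Specifically, $P$ begins with a ``buffer'' $Q$ of $(q+2n+2)$ vertices, followed by alternating spacers $Q'$ of $(2n+2)$ vertices and marker vertices $i_1,\dots,i_{2n+3}$, so that the subpath $P\setminus\{w,Q\}$ contains exactly $(2n+3)^2$ vertices. The key structural fact to verify is that since the burning number of a path on $r$ vertices is $\lceil\sqrt{r}\rceil$~\cite{Bessy2017}, any burning sequence of length at most $(q+2n+3)$ for $G'$ must devote $(2n+3)$ of its sources to the subpath $P\setminus\{w,Q\}$; moreover, because the leading buffer $Q$ has length $(q+2n+2)$, none of those $(2n+3)$ sources can reach into the $H$-portion of $G'$ within the allotted number of rounds.

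With that structural claim in hand, the translation between vertex covers of $H$ and burning sequences of $G'$ proceeds exactly as in Lemma~\ref{lem:fantastic}. In the forward direction, given a vertex cover $\mathcal{C}$ of $H$ of size $q$, I can assume $w\in\mathcal{C}$ (replacing $z$ by $w$ if necessary); then burning the v-vertices corresponding to $\mathcal{C}$ in the first $q$ rounds and simultaneously accounting for the $(2n+3)$ sources on $P\setminus\{w,Q\}$ in the last $(2n+3)$ rounds burns all of $G'$ in $(q+2n+3)$ steps, since the path-burning routine of~\cite{Bessy2017} handles $P$ and every vertex of $H$ lies within $(2n+3)$ hops of some v-vertex of $\mathcal{C}$. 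In the reverse direction, the $(2n+3)$ sources forced into $P\setminus\{w,Q\}$ leave at most $q$ sources to be distributed among the subgraphs $H_{bc}$, and the ``closest endpoint'' selection rule from Lemma~\ref{lem:fantastic} yields a vertex cover of $H$ of size at most $q$.

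Finally, I would reuse the PTAS-to-PTAS argument from Theorem~\ref{t:apx}: the total number of vertices of $G'$ is still $O(n^2)$, the burning number of a connected graph on $r$ vertices is at most $2\sqrt{r}$, and Brooks' theorem gives $|\mathrm{VC}^*(H)|\ge n/3$, so a $(1+\varepsilon)$-approximation for the burning number of $G'$ translates into a $(1+O(\varepsilon))$-approximation for minimum vertex cover on $G$, contradicting its APX-hardness. The main obstacle I anticipate is the structural claim about the path $P$: one must check carefully that the spacer lengths $(2n+2)$ and the buffer length $(q+2n+2)$ are large enough to both force the correct number of path-sources \emph{and} keep them from contaminating the budget available for covering $H$, so that the bijection between cover size and burning-sequence length is exact up to the additive constant $2n+3$.
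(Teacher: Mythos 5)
Your plan matches the paper's own argument essentially step for step: the same auxiliary graph $H$ obtained by attaching the path $v,w,z$, the same replacement of the isolated vertices by a path $P$ with a buffer $Q$ of $(q+2n+2)$ vertices and spacers $Q'$ of $(2n+2)$ vertices so that $P\setminus\{w,Q\}$ has $(2n+3)^2$ vertices and forces exactly $(2n+3)$ sources, and the same transfer of the PTAS argument from Theorem~\ref{t:apx}. The structural concern you flag at the end is legitimate (the paper's own treatment of it is terse), but your approach and the paper's are the same.
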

}
 
The generalization of the APX-hardness  proof for $k$-burning number is included in the Appendix. 


\section{Approximation Algorithms}

\jyoti{Bonato and Kamali~\cite{bonato2019approximation} gave an $O((n+m)\log n)$-time 3-approximation for burning number We
  leverage Hochbaum and Shmoys's~\cite{DBLP:journals/jacm/HochbaumS86} framework for designing the approximation algorithm and  give a generalized algorithm for computing $k$-burning number. For convenience, we first describe the 3-approximation algorithm in terms of Hochbaum and Shmoys's~\cite{DBLP:journals/jacm/HochbaumS86} framework.
 
}

\subsection{Approximating Burning Number} 
\jyoti{Here we show that for connected graphs, the burning number can be approximated within a factor of 3 in $O((n+m)\log n)$ time. 
 Let $G^{i}$ be the $i$th power of $G$, i.e., the graph obtained by taking a copy of $G$ and then connecting every pair of vertices with distance at most $i$ with an edge. We now have the following lemma.

\begin{lemma}
\label{lem:ub}
Let $G$ be a connected graph and assume that  $b(G)=t$. Then $G^t$ must have a dominating set of size at most $t$.
\end{lemma}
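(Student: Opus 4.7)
The plan is to argue from a minimum burning sequence directly, using that the set of burning sources at the last round occupies exactly the positions one would need as dominators in the power graph $G^t$.

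First I would fix a minimum burning sequence $v_1,v_2,\ldots,v_t$ for $G$, where $v_i$ is the burning source introduced at round $i$. The key observation is a standard distance bookkeeping claim: the source $v_i$ is burnt at round $i$, and then the fire propagates to its graph neighbors in each subsequent round, so by the end of round $t$ the set of vertices burnt \emph{because of} $v_i$ is exactly the closed ball $B_G(v_i, t-i)$ in $G$. Since the process terminates with every vertex burnt, taking the union gives
\begin{equation*}
V(G) \;=\; \bigcup_{i=1}^{t} B_G(v_i,\, t-i).
\end{equation*}

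Next I would convert this covering statement into a domination statement in $G^t$. Set $D=\{v_1,\ldots,v_t\}$, which has size at most $t$ (with equality unless a source repeats, which we may assume away). For any vertex $u \in V(G)$ pick an index $i$ with $u \in B_G(v_i,t-i)$, so the graph distance $d_G(u,v_i) \le t-i \le t-1 < t$. By definition of the $t$th power, any two vertices within distance $t$ in $G$ are adjacent in $G^t$ (unless they coincide), so either $u=v_i \in D$ or $u$ is adjacent in $G^t$ to $v_i \in D$. Hence $D$ dominates $V(G^t)$, proving $\gamma(G^t) \le t$.

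There is really no hard step here; the only thing to be careful about is the round-indexing convention. One has to be explicit that introducing $v_i$ at round $i$ gives it exactly $t-i$ further rounds of propagation by the end of round $t$, so that the radius $t-i$ (and not $t-i+1$ or $t-i-1$) is the right bound. Once that is pinned down, the inclusion $B_G(v_i,t-i)\subseteq N_{G^t}[v_i]$ is immediate because $t-i \le t$, and the lemma follows.
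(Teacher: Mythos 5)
Your proof is correct and takes essentially the same route as the paper: both arguments take the set of burning sources as the dominating set in $G^t$ and observe that every vertex lies within distance $t$ of some source. Your version is slightly more careful with the radius bookkeeping (radius $t-i$ for the $i$th source rather than a uniform $t$), but this refinement is not needed for the statement and the underlying idea is identical.
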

\begin{proof}
Since $b(G)=t$, all the vertices are burnt within $t$ rounds. Therefore, every vertex in $G$ must have a burning source within its $t$-hop neighborhood. Consequently, each vertex in $G^t$, which does not correspond to  a burning source in $G$, must be adjacent to at least one burning source. One can now choose the set of burning sources as the dominating set in $G^t$.
\end{proof}

\begin{figure}[pt]
    \centering
    \includegraphics[width=\textwidth]{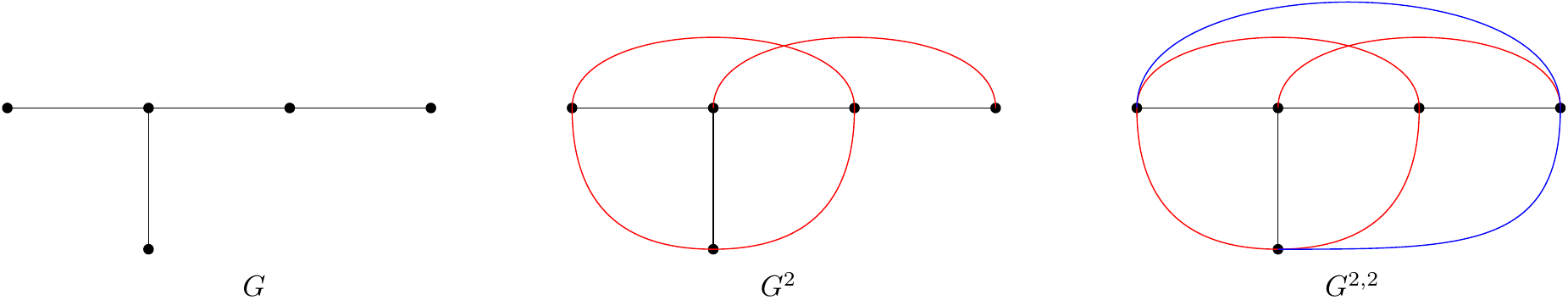}
    \caption{Illustration for the construction of $G^2$ and $G^{2,2}$ from $G$. 
}
    \label{fig:red}
\end{figure}

For convenience, we define another notation $G^{i,j}$, which is the $j$th power of  $G^i$. Although  $G^{i,j}$ coincides with  $G^{i+j}$, we explicitly write $i,j$. Let $M_{i,2}$ be a maximal independent set of  $G^{i,2}$.
 We now have the following lemma, which follows from the observation in~\cite{DBLP:journals/jacm/HochbaumS86} that the size of a minimum dominating set in $G$ is at least the size of a maximal independent set in $G^2$.  However, we give a proof for completeness.

\begin{lemma}
\label{lem:lb}
The size of a minimum dominating set in $G^{i}$ is at  least $|M_{i,2}|$. 
\end{lemma}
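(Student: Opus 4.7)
The plan is to use a direct injection argument, charging each vertex in the maximal independent set $M_{i,2}$ to a distinct vertex in the dominating set.

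First I would unwind the definitions. Since $G^{i,2}=(G^i)^2=G^{2i}$, two vertices $u,v$ are adjacent in $G^{i,2}$ iff $d_G(u,v)\le 2i$, so $M_{i,2}$ being independent in $G^{i,2}$ means that any two distinct vertices $u,v\in M_{i,2}$ satisfy $d_G(u,v)>2i$. Separately, a set $D$ dominates $G^i$ iff every vertex $x\in V$ has some $d\in D$ with $d_G(x,d)\le i$ (allowing $d=x$).

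Next, let $D$ be any dominating set of $G^i$. For each $u\in M_{i,2}$, pick a vertex $\phi(u)\in D$ that dominates $u$ in $G^i$; such a choice exists by the domination property. The core step is to show $\phi$ is injective. Suppose for contradiction that $\phi(u)=\phi(v)=d$ for two distinct $u,v\in M_{i,2}$. Then $d_G(u,d)\le i$ and $d_G(v,d)\le i$, so by the triangle inequality in $G$,
\[
d_G(u,v)\;\le\;d_G(u,d)+d_G(d,v)\;\le\;2i,
\]
contradicting the fact that vertices of $M_{i,2}$ are pairwise at distance strictly greater than $2i$ in $G$. Hence $\phi$ is injective and $|M_{i,2}|\le|D|$. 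Since $D$ was an arbitrary dominating set of $G^i$, taking $D$ to be a minimum one gives the claim.

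The only thing to be careful about is the degenerate case where $\phi(u)=u$ (a vertex of $M_{i,2}$ that happens to lie in $D$): the triangle inequality still applies with $d_G(u,u)=0$, so the argument goes through without modification. I do not anticipate a real obstacle here; the argument is essentially the standard observation that a dominating set of a graph $H$ has size at least that of any independent set of $H^2$, applied with $H=G^i$.
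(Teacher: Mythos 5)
Your proof is correct and follows essentially the same argument as the paper: both show that two distinct vertices of $M_{i,2}$ cannot share a dominator in $G^i$, since that would make them adjacent in $G^{i,2}$ and contradict independence. Your phrasing as a single injection $\phi$ (with the self-domination case noted explicitly) is a slightly cleaner packaging of the paper's counting via $M_{i,2}\setminus Q$ and $Q\setminus M_{i,2}$, but the underlying idea is identical.
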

\begin{proof}
Let $Q$ be a minimum dominating set in $G^i$. It suffices to prove that for each vertex $v$ in $(M^{i,2}\setminus Q)$, there is a distinct vertex in $(Q \setminus M_{i,2})$ dominating $v$ (i.e., in this case, adjacent to $v$).

Let $\{p,q\}\subset (M_{i,2}\setminus Q)$ be two vertices in $G^i$, which are dominated by a vertex $w\in Q$ in $G^i$. Since $w$ is adjacent to $p,q$ in $G^{i,2}$ and $M_{i,2}$ is an independent set, we must have $w\in (Q\setminus M_{i,2})$. Since $w$ is adjacent to two both $p,q$ in $G^i$,  $p,q$ will be adjacent in $G^{i,2}$, which contradicts that they belong to the  independent set $M_{i,2}$. Therefore, each vertex in $(M^{i,2}\setminus Q)$, must be dominated by a distinct vertex in $(Q \setminus M_{i,2})$.
\end{proof}
}

\jyoti{
Assume that $b(G) = t$. By Lemma~\ref{lem:lb}, $G^t$ must have a dominating set of size at least $|M_{t,2}|$. By Lemma~\ref{lem:ub}, the size of a minimum dominating set $Q$ in  $G^t$ is upper bounded by $t$. We thus have the condition $|M_{t,2}|\le |Q|\le t$. 

\begin{corollary}
Let $G$ be a graph with burning number $t$ and let $M_{t,2}$ be a maximal independent set in $G^{t,2}$. Then $|M_{t,2}|\le t$.
\end{corollary}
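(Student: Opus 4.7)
The plan is simply to chain Lemma~\ref{lem:ub} and Lemma~\ref{lem:lb}, as the paragraph immediately preceding the corollary already hints. First, I would apply Lemma~\ref{lem:ub} to the hypothesis $b(G)=t$: it supplies a dominating set $Q$ of $G^t$ with $|Q|\le t$, so in particular a minimum dominating set of $G^t$ has size at most $t$.

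Next, I would invoke Lemma~\ref{lem:lb} with the parameter $i=t$. This gives that any minimum dominating set of $G^t$ has size at least $|M_{t,2}|$, where $M_{t,2}$ is a maximal independent set of $G^{t,2}=(G^t)^2$. Combining the two bounds yields $|M_{t,2}|\le|Q|\le t$, which is precisely the desired inequality.

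There is no real obstacle: both lemmas are already established, and the corollary is essentially a bookkeeping step. The only thing worth checking explicitly is that the object named $M_{t,2}$ in the corollary matches the object to which Lemma~\ref{lem:lb} applies (a maximal independent set of the second power of $G^t$), so that the indices align when specializing the lemma to $i=t$. Once this is confirmed the proof is a one-line deduction and no further calculation is required.
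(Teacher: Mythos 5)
Your proposal is correct and matches the paper's own argument exactly: the paper derives the corollary in the paragraph just before it by combining Lemma~\ref{lem:ub} (a minimum dominating set of $G^t$ has size at most $t$) with Lemma~\ref{lem:lb} at $i=t$ (its size is at least $|M_{t,2}|$), giving $|M_{t,2}|\le|Q|\le t$. No gaps; nothing further needed.
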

Note that for any other positive integer $k<t$, the condition $|M_{k,2}|\le k$ is not guaranteed. We use this idea to approximate the burning number. We find the smallest index $j$, where $1\le j\le n$,  that satisfies  $|M_{j,2}| \leq j$ and prove that the burning number cannot be less than $j$.

\begin{lemma}
\label{lem:key}
Let $j'$ be a positive integer such that $j'<j$. Then $b(G)\not=j'$.  
\end{lemma}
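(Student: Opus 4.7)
The plan is to derive a direct contradiction from the defining property of $j$ together with the preceding corollary. Recall that $j$ was chosen as the smallest positive integer in $\{1,\dots,n\}$ satisfying $|M_{j,2}|\le j$. In particular, for every positive integer $j'<j$, the minimality of $j$ forces
\[
|M_{j',2}| \;>\; j'.
\]

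Now suppose, toward a contradiction, that $b(G)=j'$ for some $j'<j$. By the corollary immediately preceding this lemma, applied with $t=j'$, any graph whose burning number equals $j'$ satisfies $|M_{j',2}|\le j'$. This directly contradicts the strict inequality $|M_{j',2}|>j'$ obtained from the minimality of $j$. Hence no such $j'$ can be the burning number of $G$, which is exactly the claim.

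The argument is essentially a one-line reduction to the corollary, so there is no genuine technical obstacle: the real work was already carried out in Lemmas~\ref{lem:ub} and~\ref{lem:lb}, which together yielded the bound $|M_{t,2}|\le t$ whenever $b(G)=t$. The only thing to be careful about is that the corollary is stated for a graph whose burning number is $t$ (not merely bounded by $t$), and our application uses $t=j'=b(G)$ under the contradiction hypothesis, so the hypothesis of the corollary is satisfied verbatim. Thus the short contrapositive-style proof sketched above is complete.
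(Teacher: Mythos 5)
Your proof is correct and follows essentially the same route as the paper: the paper invokes Lemma~\ref{lem:ub} and Lemma~\ref{lem:lb} directly to contradict the minimality of $j$, while you invoke the corollary that packages exactly that combination, so the arguments are the same in substance.
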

\begin{proof}
Since $j$ is the smallest index satisfying $|M_{j,2}| \leq j$, for every other $M_{j',2}$, with $j' < j$ we have $|M_{j',2}| \ge j'+1$. 
 Suppose for a contradiction that $b(G)=j'$, then by Lemma~\ref{lem:ub}, $G^{j'}$ will have a dominating set of size at most $j'$. But by Lemma~\ref{lem:lb}, $G^{j'}$ has a minimum dominating set of size at least $|M_{j',2}| \ge j'+1$.   
\end{proof}

The following theorem shows how to compute a burning sequence in $G$ of length $3j$. Since $j$ is a lower bound on $b(G)$, this gives us a 3-approximation algorithm for the burning number problem.


\begin{theorem}
Given a connected graph $G$ with $n$ vertices and $m$ edges, one can compute a burning sequence of length at most $3b(G)$ in $O((n+m)\log n)$ time. 
\end{theorem}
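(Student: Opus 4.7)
The plan is to find the smallest $j\in[1,n]$ with $|M_{j,2}|\le j$, use those vertices as the opening burning sources, and then extend the sequence up to round $3j$ with arbitrary picks. By Lemma~\ref{lem:key} this $j$ satisfies $j\le b(G)$, so any burning sequence of length $3j$ will be a $3$-approximation.

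Concretely, enumerate $M_{j,2}=\{s_1,\dots,s_r\}$ with $r\le j$ in an arbitrary order and declare $s_i$ to be the burning source of round $i$ for $1\le i\le r$; for every subsequent round $i>r$, pick any still-unburnt vertex (terminating early if none remains). This is a legitimate burning sequence because any two elements of $M_{j,2}$ are non-adjacent in $G^{j,2}$ and hence lie at distance strictly greater than $2j$ in $G$, whereas at the moment we place $s_i$ the fire started by any earlier $s_\ell$ has propagated for only $i-\ell\le j-1$ steps, so $s_i$ is still unburnt.

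All vertices are burnt by round $3j$: for any $v\in V(G)$, maximality of $M_{j,2}$ in $G^{j,2}$ guarantees a source $s_i\in M_{j,2}$ at $G$-distance at most $2j$ from $v$, and this source has been burning since round $i\le r\le j$, so by round $3j$ its fire has reached distance at least $3j-i\ge 2j$ and already engulfed $v$. Therefore the produced sequence has length at most $3j\le 3b(G)$.

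For the running time I would locate $j$ by binary search on $[1,n]$, performing $O(\log n)$ trials. For a fixed $j$, a maximal independent set of $G^{j,2}$ is assembled in place by a greedy procedure: repeatedly pick an unmarked vertex $u$, add it to $M_{j,2}$, and run a depth-limited BFS from $u$ up to depth $2j$ that marks every vertex it encounters and never re-enqueues a marked vertex. Across a single trial each vertex enters a queue at most once and each edge is inspected $O(1)$ times, so each trial costs $O(n+m)$ time and the total is $O((n+m)\log n)$.

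The main obstacle is validating the binary search: it needs the predicate $|M_{j,2}|\le j$ to be monotone, i.e., to remain true for all larger $j$ once it first becomes true. I would either fix a deterministic greedy rule for building $M_{j,2}$ and derive monotonicity from the inclusion of the edge set of $G^{j,2}$ in that of $G^{j+1,2}$, or, if this turns out to be delicate, replace pure binary search by a doubling phase $j=1,2,4,\dots,2^p$ followed by a binary search inside $[2^{p-1},2^p]$; since $b(G)=O(\sqrt{n})$ by the Bessy et al.\ bound cited earlier, the doubling terminates after $O(\log n)$ trials, preserving the claimed runtime.
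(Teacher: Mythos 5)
Your proof follows the paper's argument essentially step for step: the same lower bound via Lemma~\ref{lem:key}, the same upper bound $|M_{j,2}|+2j\le 3j$ coming from the fact that a maximal independent set of $G^{j,2}$ places a source within distance $2j$ of every vertex, and the same binary search over $j$ with a greedy, depth-limited-BFS construction of $M_{j,2}$ costing $O(n+m)$ per trial. You are in fact more careful than the paper on one point: you verify that the chosen sources are still unburnt when they are scheduled (pairwise distance $>2j$ in $G$ versus at most $j-1$ rounds of propagation), a detail the paper leaves implicit.

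The monotonicity worry you raise at the end is a genuine subtlety, and the paper does not resolve it either: it simply asserts $|M_{j',2}|\le |M_{j,2}|\le j<j'$ for $j'>j$, which is not automatic for arbitrary \emph{maximal} (as opposed to maximum) independent sets of a graph and a supergraph on the same vertex set. Your first proposed fix does not work as stated: a fixed-order greedy maximal independent set can grow when edges are added (order $a,b,c,d$ with edges $bc,bd$ yields $\{a,b\}$, while adding the edge $ab$ yields $\{a,c,d\}$), so edge-set inclusion of $G^{j,2}$ in $G^{j+1,2}$ alone does not give monotonicity. The doubling variant does not escape the problem, since the binary search inside $[2^{p-1},2^p]$ still needs the predicate to be monotone, and, more importantly, the lower bound $j\le b(G)$ from Lemma~\ref{lem:key} requires the predicate to fail at \emph{every} $j'<j$, not merely at $j-1$. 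So this is a gap you share with the paper rather than one you introduce; the safe fallback is a linear scan $j=1,2,\dots$, which preserves the $3$-approximation guarantee but costs $O((n+m)\,b(G))=O((n+m)\sqrt{n})$ rather than the claimed $O((n+m)\log n)$.
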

\begin{proof}
Note that Lemma~\ref{lem:key} gives a lower bound for the burning number. We now compute an upper bound. We burn all the vertices of $M_{j,2}$ in any order. Since every maximal independent set is a dominating set, $M_{j,2}$ is a dominating set in $G^{j,2}$. Therefore, after the $j$th round of burning, every vertex of $G$ can be reached from some burning source by a path of at most $2j$ edges. Thus all the vertices will be burnt in $|M_{j,2}|+2j \leq 3j$ steps. Since $j$ is a lower bound on $b(G)$, we have $|M_{j,2}|+2j \leq 3j \leq 3b(G)$.

\begin{figure}[h]
    \centering
    \includegraphics[width=\textwidth]{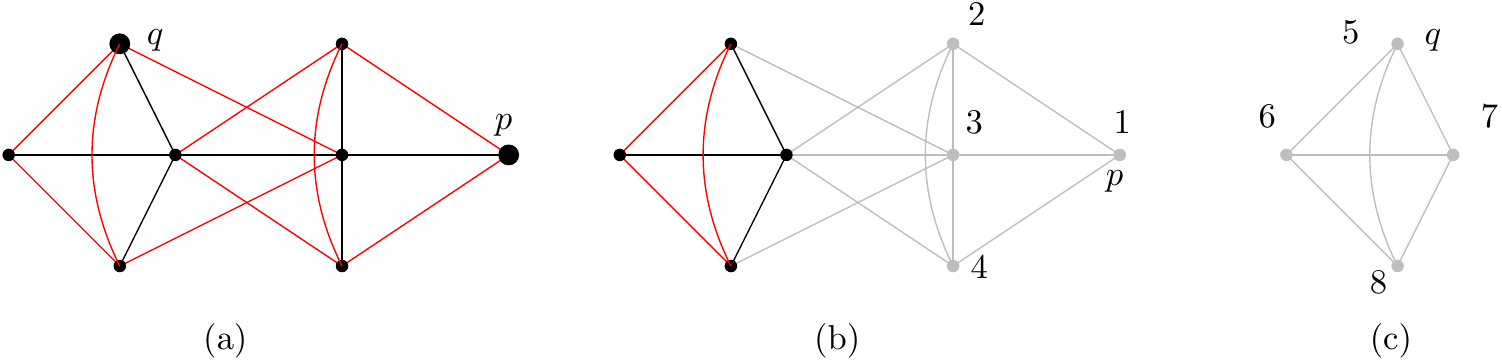}
    \caption{Illustration for computing $M_{r,2}$, when $r=1$. (a) $G^{1,2} = G^2$, where the edges of $G$ is shown in black, and a maximal independent set $M_{1,2} = \{p,q\}$. (b)--(c) Computation of $M_{1,2}$, where the numbers represent the order of vertex deletion.}
    \label{fig:maximal}
\end{figure}
It now suffices to show that the required $j$ can be computed in $O((n+m)\log n)$ time. Recall that $j$ is the smallest index satisfying $|M_{j,2}|\le j$. For any $j'>j$, we have $|M_{j',2}|\le |M_{j,2}| \le j < j'$. Therefore, we can perform a binary search to find $j$ in $O(\log n)$ steps. At each step of the binary search, we need to compute a maximal independent set $M_{r,2}$ in a graph $G^{r,2} = G^{2r}$, where $1\le r\le n$. To compute $M_{r,2}$, we repeatedly insert an arbitrary vertex $w$ of $G$ into $M_{r,2}$ and then delete $w$ along with its $r$-hop neighborhood in $G$ following a breadth-first order. Figure~\ref{fig:maximal} illustrates such a process. Since every edge is considered at most once, and the process takes $O(m+n)$ time. Hence the total time is $O((n+m)\log n)$.
\end{proof}

 
\subsection{Approximating $k$-Burning Number}

It is straightforward to generalize Lemma~\ref{lem:ub}  for $k$-burning number, i.e., if $b_k(G)=t$, then the size of  a minimum dominating set $Q$ in $G^t$ is at most $kt$. By Lemma 6, $G^t$ must have a dominating set of size at least $|M_{t,2}|$. Therefore, we have $|M_{t,2}|\le |Q| \le kt$.

Let $j$ be the smallest index such that  $|M_{j,2}|\le kj$. Then for any $j'<j$, we have $|M_{j',2}|>kj'$, i.e., every minimum dominating set in $G^{j'}$ must be of size larger than $kj'$. We thus have $b_k(G)\not = j'$. Therefore, $j$ is a lower bound on $b_k(G)$.

To compute the upper bound, we first burn the vertices of $M_{j,2}$. Since $|M_{j,2}|\le kj$, this requires at most $j$ steps. Therefore, after $j$ steps, every vertex has a burning source within its $2j$-hop neighborhood. Hence  all the vertices can be burnt within $3j \le 3 b_k(G)$ steps.

\begin{theorem}
The $k$-burning number of a graph can be approximated within a factor of 3 in polynomial time.
\end{theorem}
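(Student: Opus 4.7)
The plan is to closely mirror the 3-approximation argument for ordinary burning, with appropriate scaling by the factor $k$. First I would observe that Lemma~\ref{lem:ub} generalizes verbatim: if $b_k(G)=t$, then every vertex lies within $t$ hops of some burning source, and the total number of burning sources used over the $t$ rounds is at most $kt$; hence these sources form a dominating set of $G^t$ of size at most $kt$. Combining this with Lemma~\ref{lem:lb} (which is independent of $k$) yields $|M_{t,2}| \le kt$ whenever $b_k(G)=t$.

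Next I would introduce the lower bound. Let $j$ be the smallest positive integer with $|M_{j,2}|\le kj$. For any $j'<j$ the contrapositive of the previous observation gives $b_k(G)\ne j'$: a minimum dominating set of $G^{j'}$ would have size at least $|M_{j',2}|>kj'$, which is incompatible with $b_k(G)=j'$. This shows $b_k(G)\ge j$.

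For the matching upper bound I would exhibit an explicit $k$-burning sequence. Order the vertices of $M_{j,2}$ arbitrarily and use them as burning sources, $k$ per round; since $|M_{j,2}|\le kj$, this takes at most $j$ rounds. Because $M_{j,2}$ is a maximal independent set of $G^{j,2}$ it is in particular a dominating set of $G^{j,2}=G^{2j}$, so every vertex of $G$ lies within $2j$ hops of some source in $M_{j,2}$. After the $j$ rounds of source placement, $2j$ additional rounds of spread suffice to burn everything, yielding a total of at most $3j\le 3\,b_k(G)$ rounds.

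Finally I would address the running time in the same way as the $k=1$ case: since $|M_{j',2}|$ is non-increasing in $j'$, the condition $|M_{j',2}|\le kj'$ is monotone, so a binary search over $j'\in\{1,\dots,n\}$ finds $j$ in $O(\log n)$ iterations; each iteration computes a maximal independent set in $G^{2j'}$ via a greedy BFS-based peeling that runs in $O(n+m)$ time, giving overall $O((n+m)\log n)$. The only subtle point — and the main thing to verify carefully — is that the upper bound correctly counts rounds when $k$ sources are placed per round: one must check that $\lceil |M_{j,2}|/k\rceil\le j$, which is exactly the defining inequality for $j$, and that the subsequent $2j$ rounds of propagation truly cover every vertex even though the later sources have had less time to spread.
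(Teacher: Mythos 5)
Your argument is correct and follows essentially the same route as the paper: the same generalization of Lemma~\ref{lem:ub} to a dominating set of size at most $kt$ in $G^t$, the same choice of $j$ as the smallest index with $|M_{j,2}|\le kj$ giving the lower bound, and the same upper bound of $3j$ obtained by burning $M_{j,2}$ over at most $j$ rounds followed by $2j$ rounds of propagation. The subtlety you flag at the end resolves exactly as you suspect: even a source placed in round $j$ still has $2j$ rounds in which to spread, so every vertex within $2j$ hops of some vertex of $M_{j,2}$ is burnt by round $3j$.
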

}

\section{Burning Scheduling is NP-Hard}
It is tempting to design heuristic algorithms that start with an arbitrary set of burning sources and then iteratively improve the solution based on some local modification of the set. However,  we show that even when a set of $k$ burning sources are given as an input, computing a burning sequence (i.e., burning scheduling) using those sources to burn all the vertices in $k$ rounds is NP-hard.

We   reduce the NP-hard problem 3-SAT~\cite{booknp}. Given an instance $I$ of 3-SAT with $m$ clauses and $n$ variables,  we   design a graph $G$ with $O(n^2+m)$ vertices and edges, and a set of $2n$ burning sources. We   prove that an ordering of the burning sources to burn all the vertices within $2n$ rounds can be used to compute an affirmative solution for the 3-SAT instance $I$, and vice versa (e.g., see Figure~\ref{fig:sch}). Due to space constraints, we include the details in the Appendix.

\begin{figure}[pt]
    \centering 
    \includegraphics[width=.75\textwidth]{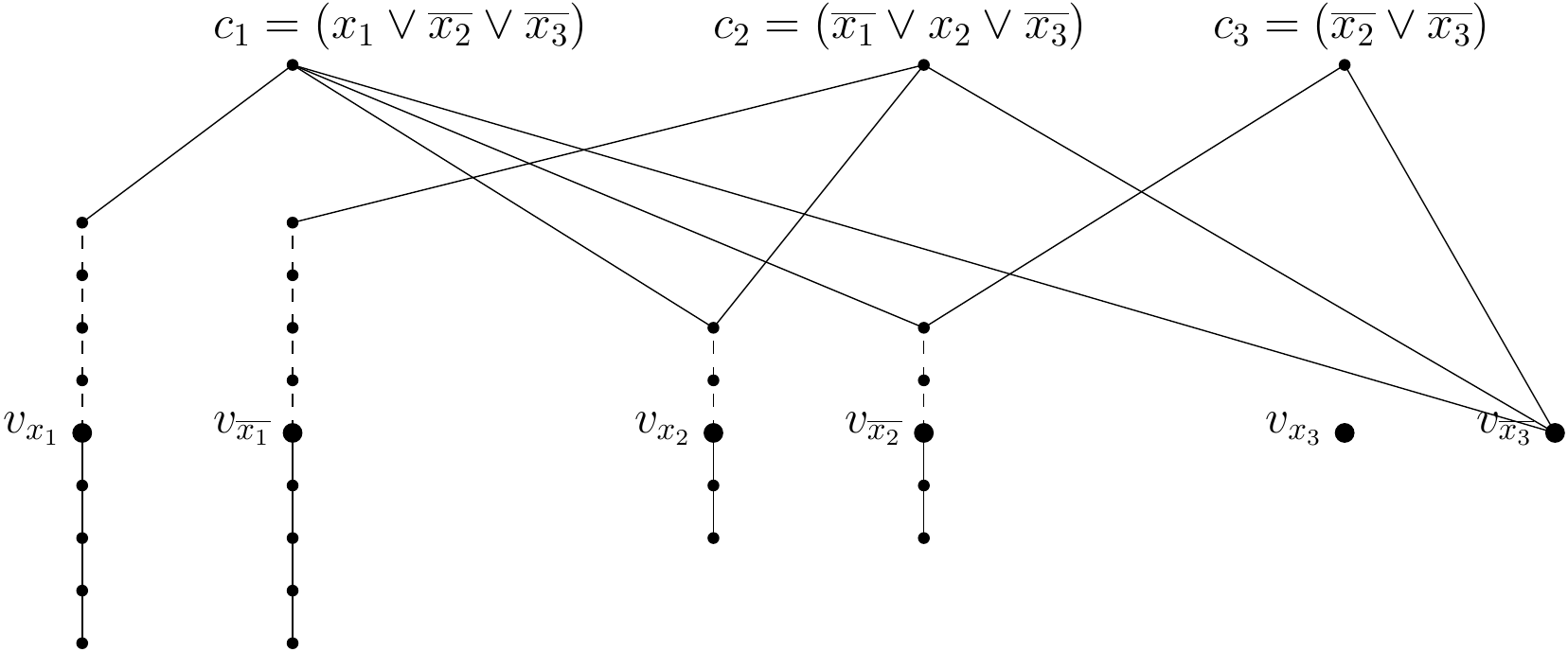}
    \caption{Illustration for the construction of $G$, where the given sources are shown in large disks. }
    \label{fig:sch}
\end{figure}
\section{Directions for Future Research}

A natural open problem is to find an improved approximation algorithm for $k$-burning number. One can also investigate whether existing approaches to compute burning number for various graph classes can be extended to obtain nearly tight bounds for their $k$-burning number. For example, the burning number of an $n$-vertex path is $\lceil\sqrt{n}\rceil$~\cite{Bonato2016}, which can be generalized to $\lceil\sqrt{n}/k\rceil$ for $k$-burning, as shown in the Appendix.

It would also be interesting to examine the edge burning number, where a new edge is burned at each step, as well as the neighboring unburnt edges of the currently burnt edges are burned. The goal is to burn all the edges instead of all the vertices. Edge burning number can be \changed{different than} the burning number, e.g.,  one can burn the vertices of every wheel graph in two rounds, but the edge burning number can be three. Given a graph, can we efficiently determine whether the burning number is equal to its edge burning number?

\bigskip
\noindent
\textbf{Acknowledgement.} We thank  Payam Khanteimouri, Mohammad Reza Kazemi, and Zahra Rezai Farokh for an insightful discussion that resulted into the addition of tail vertices while constructing  $G'$ in the proof of Lemma 1.
\bibliographystyle{abbrv}
\bibliography{Reference}

\newpage
\appendix
 

\section{APX-Hardness for $k$-Burning Number}
\jyoti{
For $k>1$, we use a similar reduction as we described for the burning number problem, except that we use a different number of division and isolated vertices. Given a decision problem that asks whether $G$ has a vertex cover of size $q$, we use $2nk$ division vertices for each edge, $n$ vertices for each tail, and $(k-1)q+k(2nk+3)$ isolated vertices to construct $G'$. We then show that the answer to the problem is affirmative if and only if $G'$ has a $k$-burning sequence of length $(q+2nk+3)$, and finally carry out the APX-hardness proof similar to  the proof of Theorem~\ref{t:apx}.  

\begin{theorem}
\label{thm:gen}
The $k$-burning number problem is APX-hard for every $k\in O(1)$.
\end{theorem}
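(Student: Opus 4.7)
The plan is to mirror the reduction of Theorem~\ref{t:apx} but rescale the gadget sizes and the pool of isolated vertices so that the per-round budget of $k$ direct burnings is fully accounted for. Starting from a cubic instance $G=(V,E)$ with $n=|V|$, I would build $G'$ exactly as before except that (i) the path between each pair of e-vertices $uv$ and $vu$ is now subdivided by $2nk$ d-vertices, (ii) each edge still carries a tail of $n$ vertices attached at the median d-vertex, and (iii) the number of isolated vertices is increased to $(k-1)q+k(2nk+3)$ where $q$ is the target vertex-cover size from the decision version. The target $k$-burning length will be $\ell = q+2nk+3$, yielding a total source budget of $k\ell = kq+k(2nk+3)$, which splits cleanly into $(k-1)q+k(2nk+3)$ isolated-vertex sources plus exactly $q$ sources to be distributed among the $H_{bc}$ gadgets.

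The next step is to prove the generalization of Lemma~\ref{lem:fantastic}: $G$ has a vertex cover of size at most $q$ if and only if $G'$ admits a $k$-burning sequence of length at most $q+2nk+3$. For the forward direction, in each of the first $q$ rounds I would burn one v-vertex from the cover $\mathcal{C}$ together with $k-1$ isolated vertices, and then in each of the subsequent $2nk+3$ rounds burn $k$ further isolated vertices. The verification that every non-isolated vertex is reached in time reduces to checking that every vertex of $G'$ lies within $(2nk+3)$-hop distance of some v-vertex of $\mathcal{C}$: this holds because, whenever $b$ or $c$ is in $\mathcal{C}$, the farthest point on $H_{bc}$ (the far endpoint of the subdivided path, or the tip of the tail at distance roughly $nk+n+2$) is at distance at most $2nk+3$, which is straightforward to check for all $k\ge 1$ and $n\ge 1$. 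For the backward direction, I would redefine $H_{bc}$ to be induced by the $(nk+1)$-hop neighborhoods of $b$ and $c$ together with the subdivided $b$-$c$ path and the associated tail, then assign to $\mathcal{C}$ the closer of $b,c$ for each burning source that lands inside an $H_{bc}$. A counting argument, using that the $(k-1)q+k(2nk+3)$ isolated vertices consume exactly $(k-1)q+k(2nk+3)$ of the $kq+k(2nk+3)$ available sources, forces at most $q$ sources to land in the connected part, bounding $|\mathcal{C}|\le q$. The cover property follows by contradiction: if some edge $(b,c)$ has no source in $H_{bc}$, then burning its long induced path plus tail from outside would require more than $\tfrac{4nk+6}{2}+n+1$ rounds, which exceeds $q+2nk+3$ for $q\le n$ (guaranteed since a minimum cover of a cubic graph satisfies $|C^*|\le n$, and we may assume $q\le n$).

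Having established this equivalence, the APX-hardness argument proceeds exactly as in the proof of Theorem~\ref{t:apx}. If $\mathcal{A}$ is a $(1+\varepsilon)$-approximation for $k$-burning, then from a $k$-burning sequence of length at most $(1+\varepsilon)b_k(G')$ I extract a vertex cover of size at most $(1+\varepsilon)b_k(G')-(2nk+3)$, so
\[
\frac{|\mathcal{C}|}{|C^*|}\;\le\;1+\frac{\varepsilon\,b_k(G')}{|C^*|}.
\]
For $k$ constant the number of vertices of $G'$ is still $O(n^2)$, so $b_k(G')\le b(G')=O(n)$, and Brooks' theorem gives $|C^*|\ge n/3$; hence $|\mathcal{C}|/|C^*|\le 1+O(\varepsilon)$, contradicting the APX-hardness of minimum vertex cover on cubic graphs.

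The main obstacle I anticipate is the bookkeeping in the backward direction: the source budget must be tight enough that the isolated vertices absorb all but $q$ of the sources, yet loose enough that the forward direction still succeeds. Choosing $(k-1)q+k(2nk+3)$ isolated vertices is precisely what makes both inequalities match. A secondary subtlety is re-verifying the distance bounds in $H_{bc}$ after scaling the subdivision to $2nk$, because the tail length is not scaled with $k$; I would double-check that $\lceil(2nk+3)/2\rceil+n\le 2nk+3$ for all $k,n\ge 1$, which holds comfortably, and that the path of $4nk+6$ vertices together with the $n$-vertex tail cannot be burned in $q+2nk+3$ rounds from outside sources alone, mirroring the calculation in Lemma~\ref{lem:fantastic}.
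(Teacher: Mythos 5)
Your proposal is correct and follows essentially the same route as the paper's proof: the same rescaled gadget ($2nk$ division vertices, $n$-vertex tails, $(k-1)q+k(2nk+3)$ isolated vertices), the same target length $q+2nk+3$ with the same source-budget split, the same $(nk+1)$-hop definition of $H_{bc}$ for the backward direction, and the same concluding approximation-ratio calculation. Your explicit verification of the distance bounds (e.g., that $nk+n+2\le 2nk+3$ and that the contradiction requires $q\le n$) is slightly more careful than the paper's, but the argument is the same.
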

\begin{proof}
We use a similar reduction as we described for the burning number problem, except that we use a different number of division and isolated vertices. Assume that $G$ has a vertex cover of size $q$. We use $2nk$ division vertices for each edge, $n$ vertices for each tail, and $(k-1)q+k(2nk+3)$ isolated vertices to construct $G'$. If $G$ has a vertex cover of size $q$, then the resulting graph $G'$ has a burning sequence of length at most $q+(2nk+3)$, as follows. We first burn $q$ vertices corresponding to the vertex cover along with $(k-1)q$ isolated vertices. Since every unburnt v-, e-, d- or tail vertex is within the $(2nk+3)$-hop neighborhood of some burnt vertex, they will be burnt in the next $(2nk+3)$ rounds by propagation. Furthermore, the isolated vertices will be directly burnt in the last $k(2nk+3)$ rounds.   

Assume now that $G'$ has a $k$-burning sequence $S$ of length $(q+2nk+3)$.  Burning the $(k-1)q+k(2nk+3)$ isolated vertices requires $(k-1)q+k(2nk+3)$ burning sources. Therefore, the remaining $q$ burning sources are distributed in the connected subgraph of $G'$. For any edge $(u,v)$,  we define $H_{uv}$ to be a subgraph of $G'$ induced by the $(nk+1)$-hop neighborhood of $u$ and $v$, as well as the vertices on the path $u,uv,\ldots,vu,v$, and the associated tail vertices. For every $H_{uv}$ and for each burning source $w$ in it, we check whether $w$ is closer to $u$ than $v$. If  $u$ (resp., $v$) has a smaller shortest path distance to $w$, then we include $u$ (resp., $v$) into $\mathcal{C}$. We break ties arbitrarily. 

We now prove that $\mathcal{C}$ is a vertex cover of $G$. Suppose for a contradiction that there exists an edge $(u,v)\in E$, where neither $u$ nor $v$ belongs to $\mathcal{C}$. Then every burning source $s$ in $G'$ is closer to some $v$-vertex other than $u$ and $v$. In other words, $H_{uv}$ is empty of any burning source. Since $H_{uv}$ contains an induced path of $(nk+1)+1+(2nk+2)+1+(nk+1) = (4nk+6)$ vertices and a tail of $n$ vertices, burning all the vertices by placing burning sources outside $H_{uv}$ would take at least $(\frac{4nk+6}{2}+n+1)$ steps, which is strictly larger than $(q+2nk+3)$. 
Therefore, by construction of $\mathcal{C}$, at least one of $u$ and $v$ must belong to $\mathcal{C}$.  

Therefore, the rest of the reduction can now be carried out following the argument presented in the proof of Theorem~\ref{t:apx}.  Note that  $\frac{|C|}{|C^*|} = \frac{(1+\varepsilon) b_k(G') -2nk- 3}{ |C^*|}
= \frac{b_k(G')+\varepsilon b_k(G') -2nk- 3}{ |C^*|}
\le \frac{(|C^*|+2nk+3) +\varepsilon b_k(G') -2nk- 3}{ |C^*|}
= 1+\frac{\varepsilon b_k(G')}{ |C^*|}$.  Since the number of vertices in $G'$ is upper bounded by $n+3n+n^2+ 3n^2k < 5n^2k$, we have $b_k(G') \le (k-1)q+k(2nk+3)+2\sqrt{5n^2k}< 8nk^2$, for sufficiently large $n$. Therefore,  $\frac{|C|}{|C^*|} 
\le 1+\frac{\varepsilon b_k(G')}{ |C^*|} 
\le 1+\frac{8nk^2\varepsilon}{ |C^*|}
=1+24k^2\varepsilon$, which contradicts the APX-hardness of minimum vertex cover.
\end{proof}
}

\section{Burning Scheduling is NP-Hard}

We will reduce the NP-hard problem 3-SAT~\cite{booknp}. Given an instance $I$ of 3-SAT with $m$ clauses and $n$ variables,  we will design a graph $G$ with $O(n^2+m)$ vertices and edges, and a set of $2n$ burning sources. We will prove that an ordering of the burning sources to burn all the vertices within $2n$ rounds can be used to compute an affirmative solution for the 3-SAT instance $I$, and vice versa.

\subsection{Construction of $G$} Let $x_1,\overline{x_1}, \ldots, x_{n},\overline{x_{n}}$ be the literals of $I$. For each literal $\ell$ we create a vertex $v_\ell$,  as shown in Figure~\ref{fig:sch} using large disks. We refer to these vertices as the \emph{literal vertices}. These vertices are the burning sources of $G$. For the $i$th positive literal, we create a path of $2(n-i)$ vertices and connect one of its ends with the literal vertex. In Figure~\ref{fig:sch}, these paths are drawn above the literal vertex with dashed edges. We will refer to these paths as the \emph{top paths} of $G$. Similarly, for the $i$th negative literal, we create a top path with $2(n-i)$ vertices and connect it to the literal vertices.  For each literal vertex in $G$, we now create a set of \emph{bottom paths} symmetrically, which are drawn below the literal vertices in Figure~\ref{fig:sch}.

For each clause $c_i$, $1\le i\le m$ in $I$, we create a vertex $v_{c_i}$. We refer to these vertices as the \emph{clause vertices}. For each literal in $c_i$, we add an edge between the literal vertex and $v_{c_i}$.  This completes the construction of $G$.

\subsection{Reduction} Let $L$ be the set of $2n$ literal vertices in $G$, which are the input burning sources. We now show that  an ordering of the burning sources that burns all the vertices within $2n$ rounds exists if and only if $I$ admits an affirmative solution. 

First assume that there exists an ordering of the burning sources that burns all the vertices within $2n$ rounds. For every source, if the round when it was burned is odd, then we set the corresponding literal to true. Otherwise, we set the literal to false, e.g., see Figure~\ref{fig:sch2}. We now prove that such an assignment will satisfy all the clauses of $I$.  

We first prove that for each index $j$ from $1$ to $n$, the literal vertices $v_{x_{j}}$ and  $v_{\overline{x_j}}$ must be burned within round  $2j$.  The reason is that both $v_{x_{j}}$ and  $v_{\overline{x_j}}$ has a bottom path with $2(n-j)$  vertices. If they are not burned within round $2j$, then we have at most $2n-(2j+1) = 2n-2j-1$ steps left which is smaller than the length of the bottom path. Therefore, $v_{x_1}$ and $v_{\overline{x_1}}$  are burnt within first two rounds, $v_{x_2}$ and $v_{\overline{x_2}}$  are burnt within the next two rounds, and so on. Therefore, our assignment based on odd and even label will consistently assign the truth values to the literals, e.g., if $x_1$ is set to true, then $\overline{x_1}$ is  set to false.

Suppose now for a contradiction that some clause $c$ is not satisfied. Then all its literals have are assigned even labels. By the construction of the top paths, each burning source with an even label can only have $2(n-j)$ rounds left to propagate the burning. Therefore, the propagation can only burn the top path, i.e., the burning does not reach the clause vertex $v_c$. This contradicts our initial assumption that the burning sources burn all the vertices within $2n$ rounds. 

\begin{figure}[h]
    \centering 
    \includegraphics[width=\textwidth]{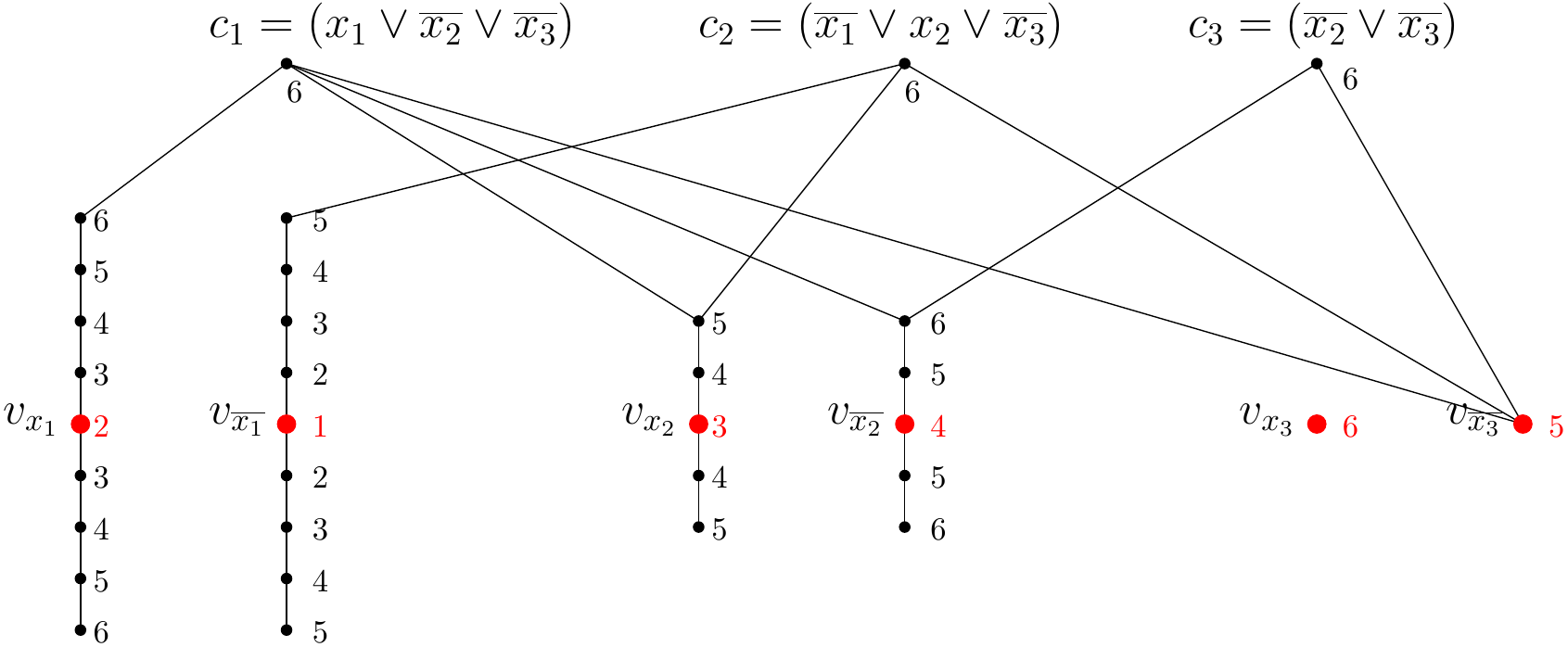}
    \caption{A burning sequence for the given sources that burns all the vertices. The literals corresponding to the sources with odd labels are set to true. In this case, $\overline{x_1}=1, x_2=1$, and $\overline{x_3}=1$. }
    \label{fig:sch2}
\end{figure}

Assume now that $I$ has an affirmative solution. For each index $j$ from $1$ to $n$, if ${x_{j}}$ is true, then we burn $v_{x_{j}}$ and  $v_{\overline{x_j}}$ at round  $(2j-1)$ and $2j$, respectively. Otherwise,  $\overline{x_{j}}$ is true, and we burn  $v_{\overline{x_j}}$ and $v_{x_{j}}$ at round  $(2j-1)$ and $2j$, respectively. Therefore, the number of rounds is $2n$. We now show that all the vertices of $G$ will be burnt. The vertices $v_{x_{j}}$ and  $v_{\overline{x_j}}$ are connected to top and bottom paths of length $2(n-j)$, and we have at least $2(n-j)$ steps left to burn these paths by propagation. Therefore, it suffices to show that the clause vertices are all burned.

Suppose now for a contradiction that some clause vertex $v_c$ is not burned. Since $c$ is satisfied, we can find a literal that is set to true.  Assume without loss of generality that the literal is a positive literal  $x_i$. Then the corresponding literal vertex $v_{x_i}$ is burned at round $(2j-1)$. Hence we will have $2n-(2j-1) = 2(n-j)+1$ steps left to propagate the burning beyond top path. Hence the vertex $v_c$ must be burnt within $2n$ rounds.   

The following theorem summarizes the results of this section.

\begin{theorem}
Given a graph $G$ and a set of $k$ burning sources. Finding a burning sequence for the given burning sources that burns all the vertices of $G$ within $k$ rounds is NP-hard.
\end{theorem}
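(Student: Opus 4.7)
The plan is to reduce from 3-SAT. Given an instance $I$ with variables $x_1,\ldots,x_n$ and clauses $c_1,\ldots,c_m$, I would build a graph $G$ together with a designated set $L$ of $2n$ burning sources, and set the round budget to $k=2n$. The sources are literal vertices $v_{x_j}$ and $v_{\overline{x_j}}$. To force timing, I would attach to each literal vertex indexed $j$ a \emph{bottom path} of $2(n-j)$ vertices (used as a timing gadget) and a \emph{top path} of $2(n-j)$ vertices (used as a propagation channel to the clauses). Clause vertices $v_{c_i}$ are attached by single edges to the three literal vertices whose literals appear in $c_i$, so that a clause is ``reached'' only through propagation along some literal's top path.

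The key forcing step is to argue, just from the bottom paths, that for every $j$ the two vertices $v_{x_j}$ and $v_{\overline{x_j}}$ must be chosen as the sources in rounds $2j-1$ and $2j$ (in some order). The counting is tight: each bottom path has length $2(n-j)$ and needs $2(n-j)$ additional rounds after its literal vertex is lit, so lighting a literal at any round later than $2j$ leaves its bottom path unburnt at time $2n$. Summing these constraints across all $j$ exhausts the budget and pins down the schedule up to choosing, within each pair $\{v_{x_j},v_{\overline{x_j}}\}$, which gets the odd slot $2j-1$ and which gets the even slot $2j$. This choice defines a truth assignment: set a literal true if its source is placed at the odd slot of its pair.

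Given this forced structure, satisfiability corresponds exactly to whether the top paths can carry fire to every clause vertex in time. A literal vertex lit at the odd round $2j-1$ has $2n-(2j-1)=2(n-j)+1$ further rounds, which is just enough to traverse its top path of $2(n-j)$ vertices and take one more edge to reach any attached clause vertex. A literal vertex lit at the even round $2j$ has only $2(n-j)$ further rounds, exactly enough to finish its top path but not to reach its clause vertices. Hence $v_{c_i}$ is burnt by round $2n$ if and only if some literal in $c_i$ is assigned true, so the scheduling problem is solvable with $k=2n$ rounds iff $I$ is satisfiable. Since $|V(G)|, |E(G)| \in O(n^2 + m)$ and the reduction is clearly polynomial, NP-hardness follows.

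The main obstacle I anticipate is the ``forcing'' half: proving that any valid schedule must place the pair $\{v_{x_j},v_{\overline{x_j}}\}$ precisely at rounds $2j-1$ and $2j$. Because no non-source propagation can help the bottom paths (they dangle off the literal vertices and have no other incoming fire), a careful exchange argument or induction on $j$ — showing that the tight deadlines at higher indices leave no alternative but to burn lower-indexed pairs earlier — should make this rigorous. Once the schedule is pinned down, the parity-based truth assignment and the $+1$ slack on the top paths make the clause-satisfaction direction essentially a round-counting verification.
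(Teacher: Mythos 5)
Your proposal is correct and follows essentially the same route as the paper: a reduction from 3-SAT with literal vertices as the $2n$ prescribed sources, bottom paths of length $2(n-j)$ forcing pair $j$ into rounds $2j-1$ and $2j$, top paths of length $2(n-j)$ so that only odd-slot (true) literals can reach their clause vertices, and the parity-based truth assignment. The forcing step you flag as the main obstacle is handled in the paper by exactly the counting/pigeonhole argument you sketch.
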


\section{$k$-Burning Number for Paths}

\begin{theorem}
\label{l2}
If $G$ is an $n$-vertex path, then $b_{k}(G) =\lceil \sqrt{n/k} \rceil$. 
\end{theorem}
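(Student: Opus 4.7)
Set $t:=\lceil\sqrt{n/k}\rceil$. My plan is to establish both $b_{k}(G)\ge t$ and $b_{k}(G)\le t$; the former by a volume-counting argument, the latter by an explicit parallel construction that reduces to the classical $1$-burning of a path.

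For the lower bound, suppose a $k$-burning sequence of length $T$ burns all of $G$. Since $G$ is a path, a source placed in round $i\in\{1,\dots,T\}$ reaches by round $T$ only the vertices within distance $T-i$ of it, i.e.\ at most $2(T-i)+1$ vertices. With $k$ sources committed per round, the total number of burnt vertices is bounded by
\[
k\sum_{i=1}^{T}\bigl(2(T-i)+1\bigr) \;=\; k\sum_{j=0}^{T-1}(2j+1) \;=\; kT^{2}.
\]
Requiring $kT^{2}\ge n$ yields $T\ge\sqrt{n/k}$, and since $T$ is an integer, $T\ge t$.

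For the upper bound, I would give an explicit schedule. Partition the path $v_{1},\dots,v_{n}$ into $k$ consecutive sub-paths $P_{1},\dots,P_{k}$, each of length at most $\lceil n/k\rceil\le t^{2}$. Within each $P_{j}$, I would apply the classical single-source arrangement for burning a path of length $t^{2}$ in $t$ rounds: the source used in round $i$ is placed at the center of a block of $2(t-i)+1$ consecutive vertices, and the $t$ blocks are laid out end-to-end so their union covers $P_{j}$. Since $\sum_{i=1}^{t}(2(t-i)+1)=t^{2}$, the blocks fit inside a length-$t^{2}$ segment, and each block is entirely burnt by the end of round $t$. Running the $k$ sub-path schedules in parallel commits exactly one source per $P_{j}$ per round, for $k$ sources per round overall, so the whole $k$-burning completes in $t$ rounds.

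The only delicate point is the boundary case $|P_{j}|<t^{2}$: I would pad $P_{j}$ virtually to length $t^{2}$, perform the same block placement, and simply drop any source whose intended position falls outside the actual sub-path, invoking the rule that a round may commit fewer than $k$ sources once the unburnt pool is too small. Alternatively, one can appeal directly to the known identity $b(P_{\ell})=\lceil\sqrt{\ell}\rceil$ applied sub-path by sub-path, which sidesteps the bookkeeping entirely. I do not anticipate a serious obstacle beyond this minor boundary accounting.
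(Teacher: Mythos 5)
Your proposal is correct and follows essentially the same route as the paper: a counting argument showing at most $kT^{2}$ vertices can be burnt in $T$ rounds for the lower bound (the paper counts new vertices per round, you count each source's eventual reach, but the arithmetic is identical), and a partition into $k$ subpaths of length at most $\lceil n/k\rceil$ burnt in parallel via the known $1$-burning schedule for the upper bound. No gaps; your explicit handling of the short-subpath boundary case is a small refinement the paper leaves implicit.
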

\begin{proof}
For a lower bound, observe that at the first round $k$ vertices are burned. In the next round, at most $2k$ new vertices are burned through propagation, as well as $k$ new vertices are burned. Let $S$ be the set of vertices that were burned by propagation and let $D$ be the vertices burned directly. Then in the $3$rd round, the set $S$ can lead to at most $S$ new burnt vertices by propagation, and the set $D$ may generate at most $2k$ new burnt vertices. Therefore, the number of vertices burned in this round is at most $|S|+2|D|+k = |S|+3k=5k$. In a general $m$th step, we burn at most $(2m-1)k$ new vertices.  Therefore, 
$k+3k+5k+\ldots+(2b_k(G)-1)k \ge n$, which implies $b_k(G)\ge\lceil \sqrt{n/k} \rceil$. 
 

For an upper bound, we can compute the partition of the path into $k$ subpaths, each with at most $\lceil n/k\rceil$ vertices. We can then apply the known algorithm for burning number~\cite{Bonato2016} to burn all the paths in parallel by a $k$-burning process in $\lceil \sqrt{\lceil n/k\rceil}\rceil = \lceil \sqrt{n/k} \rceil$ rounds. 
\end{proof}

\end{document}